\documentclass[journal]{IEEEtran}
\usepackage[nocompress]{cite}
\usepackage{amsmath,amssymb,amsfonts}
\usepackage{float}
\floatstyle{ruled}
\newfloat{algorithm}{t}{loa}
\floatname{algorithm}{Algorithm}
\usepackage{graphicx}
\usepackage{textcomp}
\usepackage{xcolor}
\usepackage{booktabs}
\usepackage{amsthm}
\makeatletter
\def\abstract{\normalfont\@IEEEabskeysecsize\noindent\textbf{\textit{\abstractname}: }\ignorespaces\@IEEEgobbleleadPARNLSP}

\def\IEEEkeywords{\normalfont\@IEEEabskeysecsize\vspace{0.4\baselineskip}\noindent\textbf{\textit{\IEEEkeywordsname}: }\ignorespaces\@IEEEgobbleleadPARNLSP}

\makeatother
\usepackage[hidelinks]{hyperref}
\usepackage{orcidlink}
\usepackage[capitalize]{cleveref}

\newtheorem{theorem}{Theorem}
\newtheorem{lemma}{Lemma}
\newtheorem{corollary}{Corollary}
\newtheorem{proposition}{Proposition}
\theoremstyle{definition}
\newtheorem{assumption}{Assumption}
\crefname{assumption}{Assumption}{Assumptions}
\Crefname{assumption}{Assumption}{Assumptions}
\crefname{algorithm}{Algorithm}{Algorithms}
\Crefname{algorithm}{Algorithm}{Algorithms}

\newcommand{\Skip}{\mathrm{Skip}}
\newcommand{\Rrisk}{R}

\begin{document}

\title{Audited Selective Verification for Risk-Controlled N-1 Thermal Contingency Screening under Deployment Shift}

\author{Jayakumar~Manoharan~\orcidlink{0009-0009-7765-9165},~\IEEEmembership{Senior~Member,~IEEE}%
\thanks{J. Manoharan is with the Electric Power Research Institute (EPRI), Charlotte, NC, USA (e-mail: jmanoharan@epri.com; ORCID: 0009-0009-7765-9165).}}

\markboth{IEEE Transactions on Power Systems}{Manoharan: Audited Selective Verification for N-1 Security}

\maketitle

\begin{abstract}
Real-time N-1 contingency screening in an energy management system trades assurance against cost: verifying every credible outage with full power flow is too slow, while fast linear-sensitivity screening gives no statistical guarantee and can silently pass unsafe operating points, especially when a controller drives the system into unfamiliar regimes. This paper introduces Audited Selective Verification, a risk-budgeted screening and triage layer for any controller's output (optimization, model-predictive, or learned). A cheap surrogate proposes which outages to skip; an online audit runs full power flow on a small random sample each window; and a calibrated threshold certifies a thermal-violation-rate bound for the skipped set at a chosen budget and confidence, with a corresponding bound for the unverified trusted subset. Validity rests on real verification and the audit rather than on surrogate accuracy, so it holds under arbitrary deployment shift. It is a risk-budgeted screen, not a replacement for deterministic verification when policy requires checking every credible contingency. On three public transmission systems up to 1354 buses, the realized violation rate stays within budget, standard deterministic and calibrated screens become unsafe under shift, and the method cuts full power-flow studies by 29 to 75 percent per real-time operating point.

\end{abstract}

\begin{IEEEkeywords}
N-1 security assessment, contingency analysis, transmission system operation, risk-based security assessment, distribution-free risk control.
\end{IEEEkeywords}

\section{Introduction}
\IEEEPARstart{K}{eeping} the transmission system secure against any single credible outage, the N-1 criterion, is a core task of real-time operation, and an energy management system must reassess it continuously as operating points change \cite{wood2013}. Verifying it exactly means a full AC power flow for every credible contingency, too slow to repeat after every candidate action, so operators rely on fast screening. This assurance requirement also gates adoption of the faster, more adaptive controllers now entering the control room, reinforcement-learning agents for topology control and model-predictive schemes for redispatch \cite{marot2021,lehna2024,saferlreview2024}: no operator will hand control to a policy without evidence that its actions respect N-1 security.

Providing that evidence cheaply is hard. The standard fast alternative is deterministic screening with linear sensitivities (power transfer and line outage distribution factors), which ranks contingencies and skips those it predicts to be safe \cite{wood2013}. Screening is fast, but it offers no statistical guarantee: it trusts a linear approximation, and when that approximation is wrong it silently passes unsafe actions. The problem is sharpest precisely when a learned controller is deployed, because the controller changes the distribution of operating points away from the historical operator, so any safety estimate calibrated on past data can become invalid at deployment. This is a feedback-induced distribution shift, and it breaks the exchangeability that distribution-free calibration normally relies on.

This paper asks: can we provide a distribution-free finite-sample guarantee that risk-controls a black-box controller's N-1 thermal exposure, cheaply, and robustly to the controller's own shift? We answer yes, with a method we call Audited Selective Verification (ASV-N1).

The key idea is to stop trusting the cheap surrogate for safety and instead use it only to save work. In each control window, ASV-N1 (i) scores every credible contingency with a cheap linear surrogate, (ii) draws a small random audit and runs full AC power flow on it, and (iii) uses the audited (score, violation) pairs to calibrate, by a Learn-Then-Test risk-control procedure \cite{ltt2021,rcps2021}, the largest surrogate threshold below which the skipped contingencies are certified to violate with probability at most a budget. Contingencies above the threshold are verified by full AC; those below it, outside the audit, are trusted. Because the certificate rests on real verification and an audit drawn from the deployment distribution, its validity does not depend on whether the surrogate is accurate, and it holds under arbitrary shift. A poor surrogate does not make the certificate unsafe; it only forces more contingencies to be verified, raising compute cost.

We first show why simpler designs do not suffice. A static threshold certificate, and even an adaptive one based on adaptive conformal inference \cite{gibbs2021}, control the violation rate only when the surrogate separates safe from unsafe contingencies in the operating regime. On three public systems this holds on IEEE 300-bus but fails on IEEE 118-bus and the 1354-bus PEGASE system, where under stress the surrogate labels as safe many contingencies that in fact violate. This negative result motivates verification-backed validity.

\textbf{Contributions.}
\begin{enumerate}
\item We introduce a distribution-free, per-window risk-controlled N-1 thermal screening certificate for arbitrary black-box grid controllers, formulated as a transductive instance of Learn-Then-Test risk control \cite{ltt2021,rcps2021}, whose validity does not depend on surrogate accuracy or on distribution stationarity, and is therefore robust to controller-induced feedback shift (\Cref{sec:method,sec:guarantee}). It controls a violation-rate budget over the screened contingencies, not worst-case N-1 security for every individual skipped outage. To our knowledge this is the first such certificate for N-1 thermal screening, although the underlying risk-control machinery is standard.
\item We give an online-audited selective-verification mechanism that combines cheap-surrogate triage with Learn-Then-Test calibration of the skip threshold, plus an adaptive audit-sizing rule that minimizes total AC-solve cost (\Cref{sec:method}).
\item We establish a negative result that motivates the design: static and adaptive threshold certificates on a cheap surrogate are valid only under a surrogate-stability condition that fails on real systems, and deterministic screening is unsafe under shift (\Cref{sec:fail,sec:experiments}).
\item We provide an open, reproducible evaluation on public systems (IEEE 118-bus and 300-bus, PEGASE 1354-bus, with the pandapower solver \cite{pandapower2018}) reporting the safety-versus-cost tradeoff under both load-induced and genuine controller-induced shift (\Cref{sec:experiments}).
\end{enumerate}

The method certifies the action of any controller, learned or not; learned controllers are our motivation, and we demonstrate robustness to a genuine controller-induced shift using an optimization-based redispatch policy, leaving an end-to-end study with a trained learning-based controller to future work. In operational terms, ASV-N1 is an online transmission security-assessment and risk-based contingency-screening layer for bulk power systems: it complements security-constrained optimal power flow and online contingency analysis and does not replace deterministic verification where policy mandates it. The strongest single result is that ASV-N1 stays within budget on all three systems while a deterministic screen is unsafe under shift, while skipping the majority of full power-flow solves where the surrogate is reliable.

\section{Related Work}
\label{sec:related}

\paragraph{Contingency screening and security assessment}
Fast N-1 screening is a mature part of operations. Linear sensitivity factors (power transfer distribution factors and line outage distribution factors) estimate post-contingency flows in closed form, and screening rules rank and skip contingencies whose estimated loading is well below limits, reserving full AC analysis for the rest \cite{wood2013,stott1987,ejebe1979}. Security-constrained optimal power flow (SCOPF) embeds N-1 constraints directly into the dispatch optimization \cite{monticelli1987,capitanescu2011}, and recent work studies adversarially robust learning of such constraints \cite{robustscopf2021}. Unlike SCOPF, which re-optimizes the dispatch to satisfy worst-case N-1 constraints, ASV-N1 does not change the dispatch: it gives an explicit finite-sample statistical bound on the violation rate among the contingencies it skips, per control window, for whatever action the controller proposes, and can therefore wrap an SCOPF, an RL policy, or any other controller as a verification-backed overlay. These methods are fast and widely used, but they are deterministic: they assume a fixed operating point and provide no probabilistic guarantee that a skipped contingency is actually safe. We show that under deployment shift this assumption fails and a deterministic screen can skip contingencies that violate. ASV-N1 keeps the same linear surrogate for efficiency but replaces the deterministic skip rule with an audited, distribution-free certificate. Beyond deterministic screening, risk-based security assessment quantifies operational risk across contingencies for control-room decisions \cite{onlinerbsa2003,riskvsdeterministic2007}, but these methods estimate expected risk under assumed contingency probabilities rather than bounding the screened violation rate. Unlike prior risk-based security assessment, ASV-N1 provides finite-sample, per-window distribution-free control of the screened violation rate using online AC-backed auditing, valid under arbitrary deployment shift.

\paragraph{Distribution-free uncertainty and risk control}
Conformal prediction and its risk-control extensions provide finite-sample, distribution-free guarantees for black-box predictors \cite{vovk2005}. Risk-controlling prediction sets and the Learn-Then-Test framework calibrate a decision threshold so that a monotone risk is controlled with high probability \cite{rcps2021,ltt2021}, and conformal risk control extends this to general monotone losses \cite{crc2024}. Standard guarantees assume exchangeability; weighted conformal prediction handles covariate shift when the likelihood ratio is known \cite{tibshirani2019}, adaptive conformal inference maintains long-run coverage under arbitrary shift by online threshold updates \cite{gibbs2021}, and analyses beyond exchangeability bound the coverage loss under distribution drift \cite{barber2023}. Crucially, these guarantees either assume exchangeability between calibration and deployment, or they remain valid only under a shift that is explicitly modeled (for example a known or estimated likelihood ratio) \cite{tibshirani2019,gibbs2021,barber2023}. Recent work brings conformal ideas closer to control and to power systems: conformal policy control calibrates how far a deployed policy may deviate from a reference \cite{conformalpolicy2026}, decision-calibrated prediction sets calibrate forecast uncertainty so a robust optimization stays feasible \cite{decisioncal2026}, and, in robotics, adaptive conformal prediction has been combined with control-barrier functions to enforce state constraints during safe reinforcement learning \cite{adaptivecpsaferl2025}. All of these either presume exchangeability or a modeled shift, or they trust the cheap predictor whose error they calibrate. ASV-N1 differs on exactly this point: it is a transductive instance of Learn-Then-Test in which the calibration set is an online audit drawn from the deployment distribution itself, so its validity is robust to arbitrary controller-induced feedback shift and does not rely on surrogate accuracy. We further show, as a negative result, that directly applying a static or adaptive threshold certificate to a cheap surrogate score is insufficient for N-1 safety when the surrogate is unreliable under stress.

\paragraph{Safe learning for power systems}
Safe reinforcement learning for power systems has grown rapidly \cite{saferlreview2024}. Shielding constrains a learning agent to a precomputed safe action set \cite{shielding2018}, and decision-calibrated prediction sets calibrate forecast uncertainty so that a downstream robust optimization remains feasible \cite{decisioncal2026}. These approaches either modify the learning process, assume an exact shield, or certify uncertainty in inputs rather than the security of a chosen action. ASV-N1 differs in three ways: it certifies the action of any frozen black-box controller after the fact, its guarantee is backed by real AC verification rather than by an assumed-exact model, and it targets per-action N-1 security under controller-induced shift. Benchmarks for learning-based grid control are also emerging \cite{rl2grid2025,safepowergraph2024,commonpower2024}; our open evaluation of the safety-versus-cost tradeoff complements them by scoring safety certificates rather than control performance.

\section{Problem Setup}
\label{sec:setup}

\subsection{N-1 thermal security of a control action}
Consider a transmission network operated in discrete control windows. A window is one control decision: the controller commits to an operating point, and that point holds until the next decision. The natural window is therefore a single operating point. As we show in \Cref{sec:experiments}, a single operating point yields a safe but conservative certificate, because the audit is small; batching the credible contingencies of several consecutive operating points, or monitoring a wider credible set, enlarges the per-window sample and tightens the certificate at the cost of a coarser time resolution. We treat a controller as a black box: it may be a reinforcement learning policy, a model predictive controller, or any other decision rule producing an operating point, that is, a setting of generation, load, and topology. Let $\mathcal{C}=\{1,\dots,N\}$ index the credible single-line N-1 contingencies for that operating point. For contingency $i$, let $V_i \in \{0,1\}$ be the true post-contingency thermal-violation label, with $V_i=1$ if and only if the full AC power flow after outage $i$ overloads some monitored line beyond its rating. Computing $V_i$ requires a full AC solve, which is the cost we wish to avoid.

A cheap surrogate assigns each contingency a danger score $r_i\in\mathbb{R}$, with larger $r_i$ meaning more dangerous. We use a standard linear screen: the post-contingency line loadings are estimated from the AC base-case flows redistributed by the line outage distribution factors, and $r_i$ is the worst estimated overload over monitored lines. The score is available for all contingencies without any contingency AC solve. We make no assumption that the surrogate is accurate.

\subsection{The certification problem}
Fix a safety budget $\alpha\in(0,1)$ and a confidence level $1-\delta$. The operator wants to act on the controller's operating point only if the contingencies it does not verify are, with high confidence, safe to skip. Concretely, we seek a decision rule that partitions $\mathcal{C}$ into a verified set (on which full AC is run) and a trusted set (skipped, no AC), such that the violation rate among the trusted set is at most $\alpha$ with probability at least $1-\delta$, while keeping the verified set, hence the compute cost, small. The challenge is that the joint distribution of $(r_i,V_i)$ at deployment is unknown and, because the controller shapes the operating point, generally differs from any historical distribution. We therefore require a guarantee that holds for an arbitrary deployment distribution and does not presume the surrogate is reliable.

\section{Why Threshold Certificates Fail}
\label{sec:fail}

A natural first design is a threshold certificate: calibrate a single skip threshold on the surrogate score using historical (score, violation) pairs, then trust every deployed contingency whose score falls below it. We summarize why this design, and even its adaptive refinement, does not give an N-1 safety guarantee under deployment shift. This negative result motivates the audited method of \Cref{sec:method}.

\paragraph{Static threshold certificate}
Calibrate $\tau$ as the largest threshold whose historical skipped-set violation rate is at most $\alpha$, then skip $\{i: r_i\le\tau\}$ at deployment. This is exactly a split-conformal (risk-controlling) certificate calibrated on a historical holdout \cite{rcps2021,crc2024}, the natural distribution-free baseline; it is valid under exchangeability between calibration and deployment. This controls the violation rate at deployment only if the score-conditional violation probability is stable between calibration and deployment. When a learned controller drives the system into a more stressed regime, the same score corresponds to a higher true violation probability, and the skipped set violates above $\alpha$. We observe exactly this: across our systems the static certificate is valid on IEEE 300-bus but its deployed skipped-set violation reaches $0.5$ to $0.86$ on IEEE 118-bus and PEGASE, far above an $\alpha=0.15$ budget.

\paragraph{Adaptive threshold certificate}
Adaptive conformal inference \cite{gibbs2021} updates the threshold online from realized outcomes to maintain a long-run rate. We implement this for the skip threshold. It improves matters but does not fix them: when the surrogate labels truly violating contingencies as safe, no threshold on the score can exclude them, so the achievable violation rate is floored by the violation rate in the surrogate's safest score bin. On PEGASE, in the stressed regime, $97$ percent of contingencies that all violate receive a score indicating no predicted overload; the adaptive certificate therefore cannot bring the violation rate below roughly $0.5$. \Cref{fig:fail} shows the cumulative skipped-set violation rate of the static and adaptive certificates as the deployment drifts: both exceed the budget on the sharp-transition systems, while IEEE 300-bus, where the surrogate separates violations, stays safe.

\begin{figure}[t]
\centering
\includegraphics[width=\linewidth]{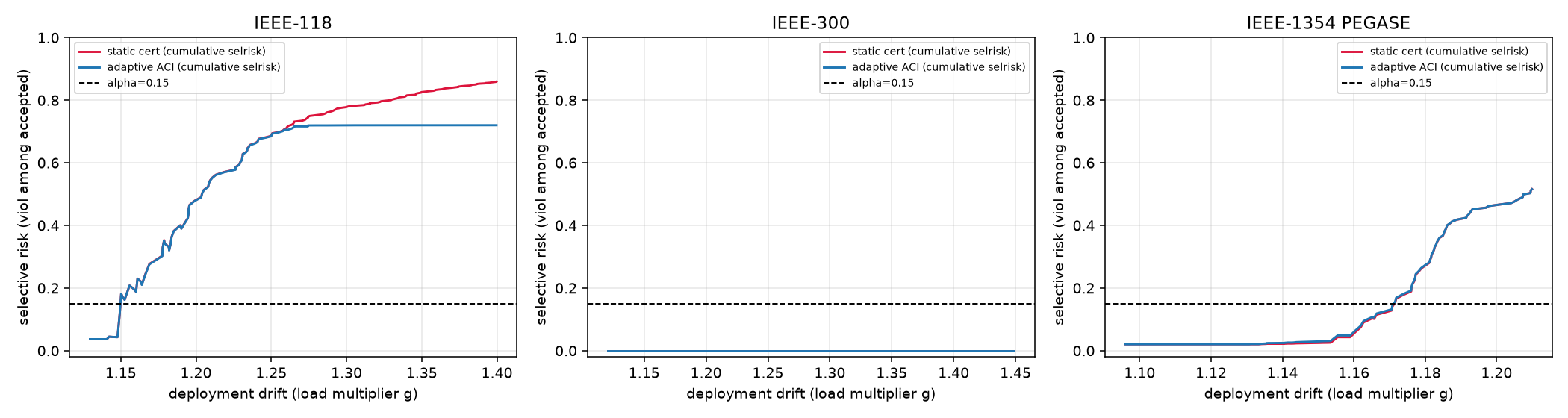}
\caption{Threshold certificates fail under shift. Cumulative skipped-set violation rate as the deployment drifts (load multiplier) for the static and adaptive (ACI) threshold certificates. Both exceed the budget $\alpha=0.15$ on IEEE 118-bus and PEGASE, where the cheap surrogate mislabels stressed contingencies as safe; only IEEE 300-bus, where the surrogate separates violations, stays safe. This motivates verification-backed validity.}
\label{fig:fail}
\end{figure}

A weighted-conformal certificate \cite{tibshirani2019} could in principle correct for the shift, but only if the deployment-to-history likelihood ratio is known or estimated, which under controller-induced shift is exactly the unknown quantity; the audit in \Cref{sec:method} sidesteps this by calibrating on a fresh on-distribution sample instead of modeling the shift. The lesson is that any certificate whose safety rests on the cheap surrogate inherits the surrogate's blind spots. ASV-N1 instead makes safety rest on real verification, using the surrogate only to decide what to verify.

\section{Audited Selective Verification}
\label{sec:method}

ASV-N1 wraps a controller and certifies each control window independently. The mechanism has three parts: a cheap triage, an online audit, and a calibrated skip threshold. \Cref{alg:asv} states it.

\subsection{Triage, audit, and calibration}
Given the window's contingency set $\mathcal{C}$ with scores $\{r_i\}$, ASV-N1 draws an audit $A\subseteq\mathcal{C}$ that includes each index independently with probability $\pi$ (a fixed-size uniform sample behaves identically). Crucially, $A$ is drawn from indices only, independently of the scores and labels. Full AC power flow is run on $A$, revealing $\{V_i: i\in A\}$.

For a candidate threshold $\tau$, let the skip set be $\Skip(\tau)=\{i\in\mathcal{C}: r_i\le\tau\}$, a set fixed by the scores. Let $U_\delta(k,n)$ denote the Clopper-Pearson upper $(1-\delta)$ confidence limit for a binomial rate from $k$ violations in $n$ trials. Order the distinct candidate thresholds $\tau^{(1)}<\dots<\tau^{(M)}$ (a data-independent order, strictest skip set first). Using only the audited contingencies, ASV-N1 tests these thresholds in sequence and accepts $\tau^{(j)}$ when $U_\delta(\sum_{i\in A, r_i\le\tau^{(j)}} V_i,\ |A\cap\Skip(\tau^{(j)})|)\le\alpha$. It stops at the first threshold that is not accepted and sets $\tau^\star$ to the largest threshold in the accepted prefix:
\begin{equation}
\tau^\star = \tau^{(j^\star)},\quad j^\star=\max\{\, j: \text{$\tau^{(1)},\dots,\tau^{(j)}$ all accepted}\,\}.
\label{eq:tau}
\end{equation}
This fixed-sequence (prefix) rule, rather than a global maximum over accepted thresholds, is what controls multiplicity and makes the guarantee of \Cref{sec:guarantee} valid even when the audited confidence limit is not monotone in $\tau$; it may be conservative if the accepted set has gaps. ASV-N1 then verifies (full AC) every contingency with $r_i>\tau^\star$; among $r_i\le\tau^\star$, the audited ones are already solved, and the remaining trusted contingencies are skipped.

\begin{algorithm}[t]
\caption{ASV-N1 (one control window)}
\label{alg:asv}
\textbf{Input:} contingencies $\mathcal{C}$, scores $\{r_i\}$, budget $\alpha$, confidence $1-\delta$.
\begin{enumerate}
\item Draw audit $A\subseteq\mathcal{C}$ (label-independent random sample); run AC on $A$ to obtain $\{V_i:i\in A\}$.
\item Select $\tau^\star$ by \eqref{eq:tau} via fixed-sequence testing over the ordered thresholds.
\item Verify (full AC) all $i$ with $r_i>\tau^\star$.
\item Trust (skip) all $i\notin A$ with $r_i\le\tau^\star$.
\end{enumerate}
\textbf{Output:} verified contingencies are AC-checked exactly; the trusted skipped contingencies are risk-controlled at the certified budget (skip-set rate at most $\alpha$, hence trusted-set rate at most $\alpha/(1-f)$, with confidence $1-\delta$).
\end{algorithm}

\subsection{Adaptive audit sizing}
The per-window cost is the audit size plus the number of verified contingencies. A larger audit tightens the Clopper-Pearson bound in \eqref{eq:tau}, which can raise $\tau^\star$ and shrink the verified set, so there is a cost-optimal audit size. Two variants preserve the guarantee. (i) Score-only sizing: choose the audit size as a function of the score distribution alone, for example to place a target number of audited contingencies in the candidate skip region. Because this uses only scores, not labels, \Cref{asm:audit} holds and \Cref{thm:main} applies directly. (ii) Cost-optimal sizing: grow a fixed grid of $G$ nested audit prefixes and, for each, predict the total AC count (audit plus verify), then select the size minimizing the predicted total. Because this selection peeks at audited labels, validity is restored by a union bound over the grid, replacing $\delta$ by $\delta/G$ in \eqref{eq:tau}; since $G$ is small (a handful of sizes), the resulting widening of the confidence interval is mild. All experiments in this paper use variant (ii), cost-optimal sizing with the $\delta/G$ union-bound correction and $G=8$ candidate sizes (\Cref{prop:adaptive}); variant (i) is stated for completeness.

\subsection{Robustness to controller-induced shift}
Because the audit is drawn from the current window, the calibration in \eqref{eq:tau} uses deployment-distribution data. Re-running it every window means the certificate never extrapolates from a stale historical distribution. This is precisely what makes the guarantee robust to the shift the controller induces: the audit observes the controller's own operating points. The surrogate enters only through $\tau^\star$ and therefore affects cost, not validity; a useless surrogate yields $\tau^\star$ that verifies everything, which is expensive but still safe.

\section{Guarantee}
\label{sec:guarantee}

We state the per-window safety guarantee. The setting is transductive and distribution-free: we treat each control window as fixed and randomize only the audit, so the guarantee is per window and exact, not asymptotic over time. Here the risk is simply the fraction of skipped contingencies that violate their thermal limits.

\begin{assumption}[Label-independent audit]
\label{asm:audit}
The audit is a uniform random sample of the contingency set drawn independently of the violation labels $\{V_i\}$. Its size may be fixed in advance or chosen from label-independent information (including the scores $\{r_i\}$, as in \Cref{prop:adaptive}); conditional on the selected size, the audited contingencies form a uniform, label-independent sample.
\end{assumption}

\begin{assumption}[Trusted oracle and thermal scope]
\label{asm:oracle}
The AC N-1 power flow is the trusted verifier defining $V_i$, and $V_i$ encodes thermal-limit violations under the AC model. Voltage and reactive limits are outside the present statement and are reported separately as a diagnostic.
\end{assumption}

Order the distinct candidate thresholds $\tau^{(1)}<\dots<\tau^{(M)}$ (this order depends on the scores, which are fixed, not on the labels). Let $\Rrisk(\tau)=|\Skip(\tau)|^{-1}\sum_{i\in\Skip(\tau)}V_i$ be the true violation rate of the whole skip set.

\begin{lemma}[Valid audited p-value]
\label{lem:pvalue}
Fix $\tau$ and condition on the audited skip count $n=|A\cap\Skip(\tau)|$. Under the null $H:\Rrisk(\tau)>\alpha$, the audited violation count $K$ is the number of violations in a label-independent size-$n$ sample drawn without replacement from $\Skip(\tau)$, hence hypergeometric with mean $n\,\Rrisk(\tau)>n\alpha$. Consider the left-tail statistic $P=\Pr[\mathrm{Bin}(n,\alpha)\le K]$ (equivalently, $U_\delta(K,n)\le\alpha$). The hypergeometric lower tail is dominated by the binomial lower tail at the same mean \cite{hoeffding1963,serfling1974}, and since $\Rrisk(\tau)>\alpha$ places more mass on large $K$ than $\mathrm{Bin}(n,\alpha)$ does, small values of $P$ occur with at most their nominal probability: $\Pr[P\le\delta\mid n]\le\delta$. Marginalizing over $n$ preserves this, so $P$ is a valid (super-uniform) p-value, and the Clopper-Pearson rule is, if anything, conservative for the without-replacement audit.
\end{lemma}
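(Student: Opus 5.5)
The plan is to prove super-uniformity of $P$ in two stages: first conditional on the audited count $n$, then marginally. The conditional step reduces to a stochastic-dominance comparison between a hypergeometric and the binomial $\mathrm{Bin}(n,\alpha)$.

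\emph{Conditioning and exchangeability.} Fix $\tau$, and write $S=\Skip(\tau)$, $m=|S|$, and $D=\sum_{i\in S}V_i=m\Rrisk(\tau)$. Both are deterministic, because the window's scores and labels are fixed in the transductive setting. Under \Cref{asm:audit}, each index enters $A$ independently with probability $\pi$, or the audit is uniform given its size. In either case, conditional on $n=|A\cap S|$, the set $A\cap S$ is uniform over the size-$n$ subsets of $S$: every subset of a given size has the same probability, and the labels never enter the sampling. Hence $K\mid n\sim\mathrm{Hyp}(m,D,n)$. When $A$ is chosen by score-only sizing, this still holds, since the scores are fixed.

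\emph{Stochastic ordering.} Under $H$ we have $D/m>\alpha$, and I need $\Pr[K\le k\mid n]\le \Pr[\mathrm{Bin}(n,\alpha)\le k]$ for every $k$. I would split this into two comparisons.

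1. \emph{Hypergeometric versus binomial at the same rate.} Compare $\mathrm{Hyp}(m,D,n)$ with $\mathrm{Bin}(n,p)$, where $p=D/m$. Hoeffding's 1963 result gives $\mathbb{E}f(K)\le\mathbb{E}f(\mathrm{Bin}(n,p))$ for convex $f$. That is a convex-order statement, not a CDF ordering, so I cannot simply cite it for the tail inequality. This is the main obstacle. The direct CDF dominance "$\Pr[\mathrm{Hyp}\le k]\le\Pr[\mathrm{Bin}(n,p)\le k]$ for all $k$" is false at upper quantiles, because the hypergeometric is more concentrated. What I can use is that the two have the same mean and the hypergeometric is less dispersed. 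For $k$ below the mean $np$, the lower tail of $\mathrm{Hyp}$ is at most that of $\mathrm{Bin}(n,p)$, by the likelihood-ratio crossing argument: the ratio of the two pmfs is unimodal, so the CDFs cross only once, near the mean.

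2. \emph{Moving the rate down to $\alpha$.} Monotone likelihood ratio of the binomial in its rate gives $\Pr[\mathrm{Bin}(n,p)\le k]\le\Pr[\mathrm{Bin}(n,\alpha)\le k]$ for all $k$ whenever $p>\alpha$.

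\emph{Closing the argument.} With the ordering in hand, let $F$ denote the $\mathrm{Bin}(n,\alpha)$ CDF, so $P=F(K)$. The event $P\le\delta$ is $\{K\le k_\delta\}$, where $k_\delta=\max\{k:F(k)\le\delta\}$. The quantiles relevant to a small $\delta$ lie well below $n\alpha<np$, so step 1 applies in its lower-tail form.

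1. Chaining the two steps gives $\Pr[K\le k_\delta\mid n]\le F(k_\delta)\le\delta$.
2. Since $P$ is left-continuous in the discrete sense, this is exactly the standard proof that $F(X)$ is super-uniform when $X$ is stochastically larger than the reference.

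\emph{Remaining cases to check.}

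1. When $k_\delta$ is not below $np$, one of two things holds. Either $\delta$ is large enough that $F(k_\delta)\ge F(\lfloor n\alpha\rfloor)$, and the binomial CDF bound still dominates via step 2 plus the single-crossing property. Or I fall back on the weaker Hoeffding-based exponential bound and accept conservativeness. I would verify the single-crossing claim carefully from the pmf ratio $\binom{D}{k}\binom{m-D}{n-k}/\bigl(\binom{m}{n}\binom{n}{k}p^k(1-p)^{n-k}\bigr)$.
2. The case $n=0$ is trivial: $P=1$, so the event $P\le\delta$ never occurs.
3. Marginalizing, $\Pr[P\le\delta]=\mathbb{E}_n\Pr[P\le\delta\mid n]\le\delta$.
4. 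The equivalence with the Clopper-Pearson rule is the standard duality: $U_\delta(K,n)\le\alpha$ if and only if $\Pr[\mathrm{Bin}(n,\alpha)\le K]\le\delta$.
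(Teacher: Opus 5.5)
\textbf{Same route as the paper, but the key step fails.} You follow the paper's decomposition: hypergeometric $\to\mathrm{Bin}(n,\Rrisk(\tau))\to\mathrm{Bin}(n,\alpha)$. You are right that Hoeffding's 1963 theorem is a convex-order (moment-generating-function) comparison, not a CDF ordering. The paper's proof simply asserts lower-tail CDF dominance at that point. Your replacement argument does not close the gap, though. Write $h,b$ for the $\mathrm{Hyp}(m,D,n)$ and $\mathrm{Bin}(n,p)$ pmfs, with $p=D/m$. The ratio $h/b$ is indeed unimodal, so the CDFs $H$ and $B_p$ cross once, from $H\le B_p$ to $H\ge B_p$. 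But unimodality says nothing about \emph{where} they cross, and your claim that $H(k)\le B_p(k)$ for all $k<np$ is false. With $m=10$, $D=9$, $n=5$, you get $K\in\{4,5\}$ with probability $1/2$ each. So $H(4)=1/2>1-0.9^5\approx 0.41=B_p(4)$, although $4<np=4.5$. Your other load-bearing claim, that the cutoff $k_\delta$ lies ``well below'' the mean, is also unsupported. When $n$ is small or $\alpha$ is near $1$, $k_\delta$ can sit within one unit of $np$, that is, past the crossing.

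\textbf{The lemma itself fails there.} Take $m=|\Skip(\tau)|=29$, $D=28$ (so $\Rrisk(\tau)=28/29$), $n=3$, $\alpha=0.9655<28/29$, and $\delta=0.1$. Then $K\in\{2,3\}$, and $P=1-\alpha^3\approx 0.09997\le\delta$ exactly when $K=2$. But $\Pr[K=2\mid n]=3/29\approx 0.103>\delta$. Neither of your fallbacks rescues this. Beyond the crossing, single crossing gives $H\ge B_p$, which is the wrong direction to combine with $B_p\le B_\alpha$. An exponential Hoeffding bound certifies a different, more conservative test, not the stated $P$.

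\textbf{How to repair it.} One option is to add hypotheses together with a quantitative bound on the crossing point that provably keeps $k_\delta$ to its left. The counterexamples need $\alpha$ near $1$ and tiny $n$, so a restricted statement is plausible, but unimodality alone will not give it. The other option is to change the statistic; two valid choices are:
\begin{itemize}
\item The exact hypergeometric tail $\Pr[\mathrm{Hyp}(m,\lfloor\alpha m\rfloor+1,n)\le K]$. This is super-uniform under $H$ because $m$ is known and $\mathrm{Hyp}(m,D,n)$ is stochastically increasing in $D$.
\item $\exp(-n\,\mathrm{kl}(K/n\,\|\,\alpha))$ for $K<n\alpha$, and $1$ otherwise. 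This is valid without replacement by Hoeffding's MGF domination, together with monotonicity of $\mathrm{kl}(q\,\|\,\cdot)$ above $q$.
\end{itemize}
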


\begin{theorem}[Per-window distribution-free safety]
\label{thm:main}
Under \Cref{asm:audit,asm:oracle}, with probability at least $1-\delta$ over the audit, the threshold $\tau^\star$ selected by \eqref{eq:tau} with fixed-sequence testing satisfies
\begin{equation}
\Rrisk(\tau^\star)\ \le\ \alpha .
\end{equation}
This holds for any fixed $\{(r_i,V_i)\}$ (arbitrary deployment distribution, including controller-induced shift) and any surrogate, with no assumption on surrogate accuracy or cross-window stationarity.
\end{theorem}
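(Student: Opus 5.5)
The plan is the standard fixed-sequence (Learn-Then-Test) argument, with \Cref{lem:pvalue} supplying the per-threshold validity. Throughout, the scores and labels $\{(r_i,V_i)\}$ are held fixed. The only randomness is the audit $A$, which by \Cref{asm:audit} is label-independent. Since the order $\tau^{(1)}<\dots<\tau^{(M)}$ depends only on the scores, the family of hypotheses
\[
H_j:\ \Rrisk(\tau^{(j)})>\alpha, \qquad j=1,\dots,M,
\]
is deterministic and does not depend on $A$. For each $j$, let $P_j=\Pr[\mathrm{Bin}(n_j,\alpha)\le K_j]$, where $n_j=|A\cap\Skip(\tau^{(j)})|$ and $K_j=\sum_{i\in A,\,r_i\le\tau^{(j)}}V_i$. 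The acceptance rule $U_\delta(K_j,n_j)\le\alpha$ is exactly the rejection event $\{P_j\le\delta\}$. \Cref{lem:pvalue} then gives, for every $j$ with $H_j$ true, $\Pr[P_j\le\delta]\le\delta$, marginally over the audit size.

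The key step is to reduce the failure event to the rejection of a single deterministic true null. The event $\{\Rrisk(\tau^\star)>\alpha\}$ means the procedure accepted some $\tau^{(j)}$ with $H_j$ true. If every null is false, so that $\Rrisk(\tau^{(j)})\le\alpha$ for all $j$, there is nothing to prove. Otherwise, let $j_0$ be the smallest index with $H_{j_0}$ true. This index is fixed by $\{(r_i,V_i)\}$ and does not depend on $A$.

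Under the prefix rule, reaching any true-null threshold, and hence ending with $\tau^\star=\tau^{(j)}$ for some true $H_j$ with $j\ge j_0$, requires that $\tau^{(j_0)}$ itself was accepted. This holds because $j^\star\ge j$ forces all of $\tau^{(1)},\dots,\tau^{(j)}$ to be accepted, including $\tau^{(j_0)}$. Indices below $j_0$ are false nulls and are harmless. Therefore
\[
\Pr[\Rrisk(\tau^\star)>\alpha]\le \Pr[P_{j_0}\le\delta]\le\delta .
\]
This is also why the prefix rule avoids any union bound over the $M$ thresholds, and why no monotonicity of $\Rrisk$ or of $U_\delta$ in $\tau$ is needed. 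A second case needs checking: if no threshold is accepted, $\tau^\star$ is undefined, or equivalently $\Skip$ is empty, and everything is verified. I would adopt the convention that this case is safe, e.g.\ $\Rrisk(\emptyset)=0$.

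I expect the main obstacle to be the conditioning in \Cref{lem:pvalue} rather than the multiplicity argument. The lemma conditions on $n_{j_0}$, and one must make sure that under Bernoulli($\pi$) or fixed-size sampling the conditional law of $A\cap\Skip(\tau^{(j_0)})$, given its size, is uniform without replacement from $\Skip(\tau^{(j_0)})$. This holds because both schemes are exchangeable over indices and ignore the labels. Given that, the hypergeometric-versus-binomial domination yields super-uniformity conditionally on $n$, and averaging over $n$ gives the marginal bound.

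Finally, validity under arbitrary shift and arbitrary surrogate follows at no extra cost. The argument fixed $\{(r_i,V_i)\}$ at the outset and used only the randomness of $A$ together with \Cref{asm:oracle}, which ensures the audited $V_i$ are the true labels.
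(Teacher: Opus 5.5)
Your proposal is correct and follows the paper's proof. Both use fixed-sequence testing along the label-independent threshold order, so the failure event $\{\Rrisk(\tau^\star)>\alpha\}$ is contained in rejecting the first true null $H_{j_0}$, which has probability at most $\delta$ by \Cref{lem:pvalue}; your added care (that $j_0$ is fixed by the window, that the audit is conditionally uniform given $n_j$, and the convention when nothing is accepted) fills in details the paper leaves implicit without changing the argument.
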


\begin{proof}[Proof sketch]
The candidate-threshold order is data-independent. Fixed-sequence testing that stops at the first non-rejection, using the per-hypothesis level-$\delta$ valid p-values of \Cref{lem:pvalue}, controls the family-wise error rate at $\delta$ under arbitrary dependence \cite{holm1979,ltt2021}. Hence with probability at least $1-\delta$ every rejected hypothesis is truly false, so $\Rrisk(\tau^{(j)})\le\alpha$ for each selected $j$, and in particular $\Rrisk(\tau^\star)\le\alpha$. This is the transductive Learn-Then-Test instance with the audit as calibration and the fixed skip population as the risk object \cite{ltt2021,rcps2021}. A full proof is in \Cref{app:proof}.
\end{proof}

\begin{corollary}[Trusted-set bound, deterministic transfer]
\label{cor:subset}
On the event $\Rrisk(\tau^\star)\le\alpha$ of \Cref{thm:main}, the trusted set $T=\{i\notin A: r_i\le\tau^\star\}$ satisfies, deterministically,
\begin{equation}
\begin{split}
\frac{1}{|T|}\sum_{i\in T}V_i\ &\le\ \frac{\Rrisk(\tau^\star)\,|\Skip(\tau^\star)|}{|T|}\ \le\ \frac{\alpha}{1-f},\\
&f=\frac{|A\cap\Skip(\tau^\star)|}{|\Skip(\tau^\star)|},
\end{split}
\end{equation}
where $f$ is the audited fraction of the skip set (assuming $f<1$, that is, a nonempty trusted set). This holds with probability at least $1-\delta$ (the same event). It requires no assumption that $T$ is a clean random subsample after the data-dependent selection of $\tau^\star$: the violation count in $T$ is at most the violation count in the whole set $\Skip(\tau^\star)$, which $\Rrisk(\tau^\star)\le\alpha$ bounds. We therefore treat the skip-set rate $\Rrisk(\tau^\star)$ as the primary certified quantity; the trusted (unverified) rate is at most $\alpha/(1-f)$, and with the audit fractions used here ($f$ near $0.15$ to $0.2$) this inflation is small. Audited contingencies, being verified, carry no residual risk.
\end{corollary}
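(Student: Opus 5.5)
The corollary follows from a purely deterministic counting argument applied on the high-probability event of \Cref{thm:main}, so the plan is to work entirely on that event and never re-randomize. Fix any realization of the audit for which $\Rrisk(\tau^\star)\le\alpha$; by \Cref{thm:main} this event has probability at least $1-\delta$. Every inequality below will hold pointwise on it, so the $1-\delta$ confidence transfers unchanged. In particular, no conditioning on $\tau^\star$ is needed, which is what sidesteps the data-dependent selection issue.

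First, decompose the skip set as the disjoint union $\Skip(\tau^\star)=T\cup(A\cap\Skip(\tau^\star))$. This is immediate from the definition $T=\{i\notin A: r_i\le\tau^\star\}$. Since $V_i\ge 0$, we get
\[
\sum_{i\in T}V_i\le\sum_{i\in\Skip(\tau^\star)}V_i=\Rrisk(\tau^\star)\,|\Skip(\tau^\star)|.
\]
Dividing by $|T|>0$, which holds by the assumption $f<1$, yields the first inequality.

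Second, express $|T|$ through $f$. From the decomposition, $|T|=|\Skip(\tau^\star)|-|A\cap\Skip(\tau^\star)|=(1-f)\,|\Skip(\tau^\star)|$. Therefore
\[
\frac{\Rrisk(\tau^\star)\,|\Skip(\tau^\star)|}{|T|}=\frac{\Rrisk(\tau^\star)}{1-f}\le\frac{\alpha}{1-f},
\]
where the last step uses the event of \Cref{thm:main}.

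The only point requiring care is that $f$ is itself random, since it depends on $A$ and $\tau^\star$. This causes no difficulty, because the bound is stated with the realized $f$ and the algebra is exact for each realization. I would also note that the first inequality is loose: the audited violations are discarded, and they could be subtracted to tighten it. There is no genuine obstacle here; the main thing to state explicitly is that the event, and hence the confidence level, is exactly the one in \Cref{thm:main}.
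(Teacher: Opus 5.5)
Your proposal is correct and follows the paper's proof essentially step for step: on the event $\Rrisk(\tau^\star)\le\alpha$ of \Cref{thm:main}, you bound the violation count of $T\subseteq\Skip(\tau^\star)$ by that of the whole skip set, then divide by $|T|=(1-f)\,|\Skip(\tau^\star)|$, so the confidence level $1-\delta$ carries over unchanged. Your side remark is also correct: subtracting the audited violations, whose labels are known, would tighten the first inequality.
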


\begin{proposition}[Cost-optimal audit sizing]
\label{prop:adaptive}
Fix a data-independent grid of $G$ candidate audit sizes (nested prefixes of a label-independent sampling order). If the audit size is chosen from this grid by any rule that may depend on the audited labels, then running the selection of \eqref{eq:tau} at level $\delta/G$ for each candidate and reporting the chosen one preserves \Cref{thm:main} at level $1-\delta$, by a union bound over the $G$ candidates. Our experiments use $G=8$.
\end{proposition}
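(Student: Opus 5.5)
The plan is to reduce \Cref{prop:adaptive} to $G$ separate applications of \Cref{thm:main}, one per grid size, and then combine them with a union bound. First I fix the label-independent sampling order $\sigma$ of $\mathcal{C}$ and the data-independent grid $n_1<\dots<n_G$. For each $g$, let $A_g$ be the first $n_g$ indices of $\sigma$. Each $A_g$ is a uniform sample of fixed size, drawn without looking at the labels $\{V_i\}$, so it satisfies \Cref{asm:audit} with the size fixed in advance. If instead the grid is chosen from the scores only, I condition on the scores, and the same holds.

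Next, for each $g$, I define $\tau^\star_g$ as the output of the fixed-sequence rule \eqref{eq:tau} applied to $A_g$ at level $\delta/G$. \Cref{thm:main} then gives, for each fixed $g$, the bound $\Pr[E_g]\le\delta/G$, where $E_g=\{\Rrisk(\tau^\star_g)>\alpha\}$. The probability here is over the random order $\sigma$ only, with $\{(r_i,V_i)\}$ held fixed. The point to check is that \Cref{thm:main} and \Cref{lem:pvalue} use only the marginal law of a single audit. The nesting and dependence among the $A_g$ therefore play no role for any fixed $g$.

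Then I apply the union bound: $\Pr[\bigcup_g E_g]\le G\cdot\delta/G=\delta$. On the complement event, every candidate satisfies $\Rrisk(\tau^\star_g)\le\alpha$ at once. Let $\hat g$ be the index chosen by any rule, including one that looks at all audited labels and predicted AC counts. Since the bound holds for every $g$ simultaneously, it holds at $\hat g$, so $\Rrisk(\tau^\star_{\hat g})\le\alpha$ with probability at least $1-\delta$, which is \Cref{thm:main} at level $1-\delta$. \Cref{cor:subset} carries over on the same event, with $f$ computed from $A_{\hat g}$, because its transfer step is deterministic.

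The main obstacle is the subtle step of making sure the adaptivity never enters the individual $\tau^\star_g$. Each $\tau^\star_g$ must be a fixed function of $A_g$, its labels and the scores, and the data-dependent choice may only select among the $G$ precomputed outputs. The ordering of candidate thresholds within each $g$ must also stay score-only, since otherwise fixed-sequence validity is lost. I would state explicitly that the chosen size is $n_{\hat g}$. I would also note that contingencies in $A_G\setminus A_{\hat g}$ may already have been solved, but that this only reduces the set of unverified contingencies and so cannot hurt the bound.
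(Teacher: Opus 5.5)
Your proposal is correct and follows the paper's proof of \Cref{prop:adaptive}: nested prefixes of one label-independent ordering, \Cref{thm:main} applied at level $\delta/G$ to each fixed $A_g$, a union bound over the $G$ candidates, and the observation that any label-dependent selection returns one of the $G$ thresholds, all of which are valid at once on that event. One small correction to your remark about contingencies in $A_G\setminus A_{\hat g}$ having been solved: the skip-set guarantee $\Rrisk(\tau^\star_{\hat g})\le\alpha$ is indeed unaffected, but removing extra solved contingencies shrinks $T$ and can raise its violation rate, so the trusted-set bound of \Cref{cor:subset} should then use $f$ computed from all solved contingencies in the skip set, not from $A_{\hat g}$ alone.
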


\begin{corollary}[Long-run budget]
\label{cor:longrun}
Over $H$ windows, applying \Cref{thm:main} at confidence $1-\delta/H$ guarantees every window's trusted set is safe at level $\alpha$ simultaneously with probability at least $1-\delta$. A time-uniform confidence sequence \cite{howard2021} avoids the $1/H$ factor and handles adaptively chosen windows. No cross-window stationarity is assumed.
\end{corollary}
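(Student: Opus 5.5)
The plan is to apply \Cref{thm:main} once per window at a reduced confidence level, chain it with \Cref{cor:subset} to reach the trusted sets, and combine the windows by a union bound. Index the windows by $h=1,\dots,H$. In window $h$, condition on everything fixed before its audit is drawn: the operating point, the contingency set $\mathcal{C}_h$, the scores $\{r_i\}$ and the labels $\{V_i\}$. These may depend arbitrarily on earlier windows, including earlier audits and earlier $\tau^\star$ values, because the controller reacts to them.

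Given this past, the window-$h$ audit is drawn fresh and label-independently, so \Cref{asm:audit} holds conditionally. \Cref{thm:main} treats $\{(r_i,V_i)\}$ as fixed and randomizes only the audit. Applied at level $\delta/H$, it therefore gives $\Pr[E_h^c\mid\mathcal{F}_{h-1}]\le\delta/H$ for the failure event $E_h^c=\{\Rrisk_h(\tau^\star_h)>\alpha\}$, where $\mathcal{F}_{h-1}$ is the history. Taking expectations gives $\Pr[E_h^c]\le\delta/H$ unconditionally. No independence across windows and no stationarity is needed, because each bound holds conditionally on any history.

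Summing over windows gives $\Pr[\bigcup_h E_h^c]\le\delta$, so all $E_h$ hold simultaneously with probability at least $1-\delta$. On that event, \Cref{cor:subset} applies deterministically in every window. Each skip set then has violation rate at most $\alpha$, and each trusted set has rate at most $\alpha/(1-f_h)$. The statement's phrase ``safe at level $\alpha$'' should be read as the certified skip-set rate, with the trusted-set inflation from \Cref{cor:subset}. If cost-optimal sizing is used, \Cref{prop:adaptive} applies inside each window, so the per-window level becomes $\delta/(GH)$.

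The step needing most care is the conditioning argument. One must check that adaptivity (the window-$h$ state depending on past audits) does not break the per-window guarantee, which holds because \Cref{thm:main} is valid uniformly over any fixed $\{(r_i,V_i)\}$. The time-uniform remark is a separate claim, which I would justify only by citation. The approach there is to replace the Clopper--Pearson bound with a confidence sequence \cite{howard2021} whose error is spread across an unbounded or stopping-time horizon, for example by spending $\delta_h$ with $\sum_h\delta_h\le\delta$ such as $\delta_h=\delta/(h(h+1))$. This handles adaptively chosen windows, but it still costs a slowly growing factor rather than removing it entirely. I would state the claim in that softened form.
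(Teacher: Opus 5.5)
Your proposal is correct and matches the argument the paper intends; the paper gives no separate proof of this corollary. The argument applies \Cref{thm:main} at level $\delta/H$ in each window, keeps it valid under adaptivity by conditioning on the history and the window's fixed state, takes a union bound over windows, and uses the deterministic transfer of \Cref{cor:subset}, with level $\delta/(GH)$ under cost-optimal sizing. Both of your caveats are warranted: the trusted sets only get $\alpha/(1-f_h)$ (the literal $\alpha$ can fail with probability above $\delta$ when $\Rrisk(\tau^\star)$ sits just below $\alpha$ and the audit under-samples the violations), and since the windows have distinct parameters and independent audits, a confidence sequence buys validity over an unknown or adaptively chosen horizon but cannot remove the multiplicity cost, which is exactly your $\sum_h\delta_h\le\delta$ form.
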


\paragraph{Skip set versus trusted set}
\Cref{thm:main} controls the violation rate of the entire skip set $\Skip(\tau^\star)$, which includes the audited contingencies that ASV-N1 has already verified by full AC. The quantity an operator acts on, and the quantity we report in \Cref{sec:experiments}, is the violation rate of the trusted set $T$ of skipped contingencies that are \emph{not} verified. \Cref{cor:subset} transfers the skip-set guarantee to exactly this trusted set, so the reported trusted-set rate is the certified quantity; the audited contingencies carry no residual risk because their true labels are known.

\paragraph{Operational reading}
In practice, \Cref{thm:main} gives the operator a tunable safety budget with a confidence level. Setting $\alpha=0.05$ and $1-\delta=0.95$ means: in each control window, among the full set of contingencies the tool deems skippable, at most five percent are at risk of a post-contingency thermal violation; among the unverified subset it actually trusts without a check, the bound is the slightly looser $\alpha/(1-f)$, about six percent at the audit fractions used here; and these statements are correct at least ninety-five percent of the time. The operator never trusts the cheap surrogate blindly: every contingency the surrogate cannot vouch for, at the certified threshold, is verified by a full power-flow study, and a small random sample of the rest is verified as an ongoing check. If the surrogate degrades (for example because a new controller drives the system into an unfamiliar regime), the audit detects it and the tool automatically verifies more contingencies, trading compute for the same safety budget rather than silently becoming unsafe.

\begin{proposition}[Cost and the role of the surrogate]
\label{prop:cost}
The per-window AC-solve count is $|A\cup\{i: r_i>\tau^\star\}|$, that is $|A|+|\{i\notin A: r_i>\tau^\star\}|$ (the audited contingencies above the threshold are counted once, not twice); the AC-solve fraction is this divided by $N$. Validity is independent of the surrogate; the surrogate enters only through $\tau^\star$. A surrogate that ranks violations well admits a larger safe $\tau^\star$ and lower cost; a useless surrogate yields verify-all, which is maximal cost but still valid.
\end{proposition}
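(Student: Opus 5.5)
The plan is to treat the cost formula as a bookkeeping identity and then reduce the two validity claims to \Cref{thm:main}.

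\textbf{Cost identity.} By \Cref{alg:asv}, a contingency receives a full AC solve exactly when it is audited (step 1) or when $r_i>\tau^\star$ (step 3). Audited contingencies above the threshold are already solved and are not re-run. So the set of solved indices is $A\cup\{i: r_i>\tau^\star\}$. Splitting this union into the disjoint pieces $A$ and $\{i\notin A: r_i>\tau^\star\}$ gives the stated count. Dividing by $N=|\mathcal{C}|$ gives the fraction.

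\textbf{Validity is independent of the surrogate.} \Cref{thm:main} holds for any fixed collection $\{(r_i,V_i)\}$, and it places no condition on how the $r_i$ relate to the $V_i$. Hence the event $\Rrisk(\tau^\star)\le\alpha$, and through \Cref{cor:subset} the trusted-set bound, hold with probability at least $1-\delta$ for every surrogate. The surrogate enters the procedure only in two places. It defines the candidate order $\tau^{(1)}<\dots<\tau^{(M)}$, and it defines the sets $\Skip(\tau)$. Both are treated as fixed in \Cref{lem:pvalue} and \Cref{thm:main}. Its only remaining effect is on which $\tau^\star$ is selected, and therefore on cost.

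\textbf{Cost is monotone in $\tau^\star$.} The count $|A|+|\{i\notin A:r_i>\tau^\star\}|$ is nonincreasing in $\tau^\star$, so a larger certified threshold never costs more. If the surrogate ranks violations well, the low-score skip sets contain few violations. The audited counts $K$ in those prefixes then stay small, the Clopper--Pearson limits stay below $\alpha$ further along the sequence, and $j^\star$ is larger. I would present this direction qualitatively, as a comparative statement about the acceptance event, rather than as a quantitative theorem.

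\textbf{Useless-surrogate extreme.} Suppose even $\tau^{(1)}$ is rejected, which happens, for example, when violations are spread uniformly across scores at a rate above $\alpha$. Then $\tau^\star$ is taken to be below every score, by the convention $\tau^\star=-\infty$ when the accepted prefix is empty. In that case $\Skip(\tau^\star)=\emptyset$ and every contingency is solved, so the count is $N$. This is maximal cost, and validity is trivial because nothing is trusted.

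\textbf{Main obstacle.} The delicate point is this degenerate empty-prefix case. $\Rrisk$ is undefined on an empty skip set, so I would state the convention $\tau^\star=-\infty$ explicitly and interpret the guarantee as vacuous there. I would also note that under variant (ii) the audit size is data-dependent, and that validity then follows from \Cref{prop:adaptive} rather than \Cref{thm:main} directly.
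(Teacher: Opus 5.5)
Your proposal is correct and follows the paper's route. The paper gives no separate proof of this proposition: it reads the cost count directly off \Cref{alg:asv} and the surrogate-independence of validity off \Cref{thm:main} (with \Cref{prop:adaptive} covering the label-dependent audit size), as you do. Your explicit empty-prefix convention $\tau^\star=-\infty$ and your restriction of the verify-all claim to surrogates whose skip sets violate above $\alpha$ usefully sharpen what the paper leaves implicit, though in that case verify-all happens with probability at least $1-\delta$, not surely.
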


\section{Experiments}
\label{sec:experiments}

\subsection{Setup}
We evaluate on three public transmission systems using the open-source pandapower solver \cite{pandapower2018}: IEEE 118-bus, IEEE 300-bus, and the 1354-bus PEGASE system. For each system we equip lines with thermal ratings so that the network is N-1 secure at nominal load with a headroom margin, then create operating points by scaling load with heterogeneous noise; generation follows load only partially, producing congestion. The aim is not to reproduce a specific utility operating condition but to test certificate validity and cost under controlled, reproducible stress regimes on public networks. For every operating point and every credible single-line contingency we compute the cheap surrogate score $r$ (line outage distribution factors applied to the AC base-case flows) and the true label $V$ (a full AC contingency solve). The label encodes thermal violations only: a contingency is labeled violating if its converged AC solution overloads a monitored line. The credible N-1 set excludes single-line outages that island the network, which are non-credible for thermal screening and appear as non-convergent solutions (islanding is detected as failure of the post-contingency AC power flow to converge, that is, the outage disconnects part of the network and is treated as a load-loss event rather than a thermal-screening case); on IEEE 300-bus the per-contingency islanding (non-convergence) rate is about $5$ percent. We treat such cases as outside the thermal scope of \Cref{asm:oracle}; a sensitivity test that instead counts every non-convergent case as a violation leaves the certificate within budget (trusted-set violation $0.000$ on IEEE 300-bus). For IEEE 118-bus we monitor all $173$ single-line contingencies, a complete N-1 sweep; for IEEE 300-bus we monitor the full credible set of about $268$ of $283$ single-line outages per operating point (about $15$ island) across $100$ operating points whose base case converges; for the 1354-bus PEGASE system we monitor all $1751$ single-line contingencies; all but about two per operating point are credible (almost none island), so this too is a near-complete N-1 sweep, at scale. In operations the credible contingency set is defined by the utility; ASV-N1 certifies the violation rate restricted to whatever set is monitored. The certified result on IEEE 300-bus is the same under the full set and a smaller random sample, confirming the evaluation is not a sampling artifact. Unless stated otherwise the budget is $\alpha=0.15$ and confidence $1-\delta=0.9$. The formal certificate (\Cref{thm:main}) controls the skip-set violation rate at $\alpha$; the unverified trusted subset is bounded by $\alpha/(1-f)$ (\Cref{cor:subset}). We report the realized (empirical) violation rate among trusted (skipped, unverified) contingencies, which is well within budget, and the AC-solve fraction relative to a full N-1 sweep, where lower is cheaper. Implementation details: the contingency sets are as above; the audit is a label-independent uniform sample whose size is chosen by the adaptive rule of \Cref{sec:method}; the budget is swept over $\alpha\in\{0.05,0.10,0.15,0.20\}$ and the confidence is $1-\delta=0.9$; all random seeds are fixed (data generation and audit). All systems, the pandapower solver \cite{pandapower2018}, and the contingency definitions are public; an anonymized code package that reproduces every table and figure accompanies this submission for review, and a permanent public repository will host it on publication. Experiments ran on a workstation with a 20-core CPU, 121 GB of RAM, and an NVIDIA GB10 GPU; the pandapower AC power flows are CPU Newton-Raphson solves (the GPU was not used for the power-flow oracle).

\subsection{Validity and cost across systems}
\Cref{tab:main} reports ASV-N1 with adaptive audit sizing at two window granularities. On all three systems the realized trusted-set violation rate is well within the budget, consistent with \Cref{thm:main}, in both settings. At the single operating point, which is the real-time control window, ASV-N1 saves $29$ to $75$ percent of the full N-1 solves (AC-solve fraction $0.25$ to $0.71$). The savings come from the surrogate: where it reliably separates safe from unsafe contingencies, the audit certifies a large skip set; where it does not, the certificate verifies more, which is why IEEE 300-bus (a well-separated case) saves most. Batching the contingencies of several operating points into one window tightens the Clopper-Pearson bound and raises the savings to $36$ to $80$ percent, at a coarser time resolution; this is a statistical-efficiency option, not a property of a single control decision. The overall violation rates span $4$ percent (IEEE 300-bus) to $63$ percent (IEEE 118-bus), so the evaluated operating points range from lightly loaded (where the surrogate separates well and savings are large) to heavily congested, and the savings are not an artifact of a single regime. In wall-clock terms, a single AC N-1 contingency solve takes about $12$ ms on IEEE 300-bus while the surrogate score is a matrix operation costing well under a millisecond, so runtime tracks the AC-solve fraction; the single-operating-point sweep of about $268$ credible contingencies takes roughly $3$ s, which ASV-N1 reduces to about $0.8$ s, and the audit contingencies are independent and parallelizable.

\paragraph{Minimum audit size}
The audit must be large enough for the Clopper-Pearson bound to license any skips. With zero audited violations, certifying a skipped-set rate at $\alpha$ with confidence $1-\delta$ requires at least $n_{\min}=\lceil \ln\delta/\ln(1-\alpha)\rceil$ audited skipped contingencies: $n_{\min}=15$ at $\alpha=0.15$, $22$ at $\alpha=0.10$, and $45$ at $\alpha=0.05$ (all at $\delta=0.1$). For the adaptive rule, which selects the audit size from $G=8$ candidates and therefore tests each at $\delta/G=0.0125$ (\Cref{prop:adaptive}), $n_{\min}$ rises to $27$ at $\alpha=0.15$. We floor the single-operating-point audit at $20$ percent of the window, which exceeds $n_{\min}$ on every system in \Cref{tab:main}. If a window is too small to reach $n_{\min}$, no threshold is certifiable and ASV-N1 verifies the whole window, safe by construction but with no savings.

\Cref{tab:alpha} sweeps the budget at the single operating point. Cost rises monotonically as $\alpha$ tightens: the operator buys stricter safety with more AC solves. An AC-solve fraction of $1.00$ means the window is smaller than the audit needed to certify that budget, so the certificate conservatively verifies everything (safe by construction, no savings). Strict budgets such as $\alpha=0.05$ remain feasible on IEEE 300-bus and PEGASE, and $\alpha=0.01$ on PEGASE; on the smaller systems they are recovered by batching. Throughout, the realized trusted-set violation stays within budget.

\begin{table}[t]
\centering
\caption{Budget sweep at the single operating point: AC-solve fraction as the risk budget $\alpha$ tightens ($\delta=0.1$). Cost rises monotonically as $\alpha$ falls; $1.00$ means the window is too small to certify that $\alpha$, so ASV-N1 verifies all (safe, no savings). The realized trusted-set violation is within budget in every cell.}
\label{tab:alpha}
\begin{tabular}{lccccc}
\toprule
System & $\alpha{=}0.20$ & $0.15$ & $0.10$ & $0.05$ & $0.01$ \\
\midrule
IEEE 118-bus & 0.71 & 0.71 & 0.75 & 1.00 & 1.00 \\
IEEE 300-bus & 0.24 & 0.25 & 0.26 & 0.37 & 1.00 \\
PEGASE 1354 & 0.47 & 0.47 & 0.47 & 0.53 & 0.76 \\
\bottomrule
\end{tabular}
\end{table}

\begin{table}[t]
\centering
\caption{ASV-N1 with adaptive audit sizing ($\alpha=0.15$) at two window granularities. ``Single operating point'' is the real-time setting (one control decision; about $173$, $268$, and $1749$ credible contingencies for IEEE 118-bus, IEEE 300-bus, and PEGASE, respectively). ``Batched'' windows contain about $1000$ to $1500$ contingencies depending on system and partitioning: on IEEE 118-bus and 300-bus this pools several consecutive operating points, whereas on PEGASE a single operating point already holds a large credible set, so batching there mainly serves the many-window statistical test rather than increasing the per-window count. Batching tightens the Clopper-Pearson bound and lowers cost at a coarser time resolution. The trusted-set violation is the realized (empirical) rate among unverified skipped contingencies; it is well within budget in both settings, while the formal certificate controls the skip-set rate at $\alpha$ (\Cref{thm:main}; trusted subset at most $\alpha/(1-f)$). AC-solve fraction is relative to a full sweep (lower is cheaper).}
\label{tab:main}
\setlength{\tabcolsep}{3pt}
\begin{tabular}{lccccc}
\toprule
& & \multicolumn{2}{c}{Single op. point} & \multicolumn{2}{c}{Batched} \\
\cmidrule(lr){3-4}\cmidrule(lr){5-6}
System & Overall viol. & trusted & AC-frac & trusted & AC-frac \\
\midrule
IEEE 118-bus & 0.63 & 0.012 & 0.71 & 0.020 & 0.64 \\
IEEE 300-bus & 0.04 & 0.005 & 0.25 & 0.006 & 0.20 \\
PEGASE 1354 & 0.34 & 0.011 & 0.47 & 0.010 & 0.37 \\
\bottomrule
\end{tabular}
\end{table}

\begin{figure*}[!t]
\centering
\includegraphics[width=\textwidth]{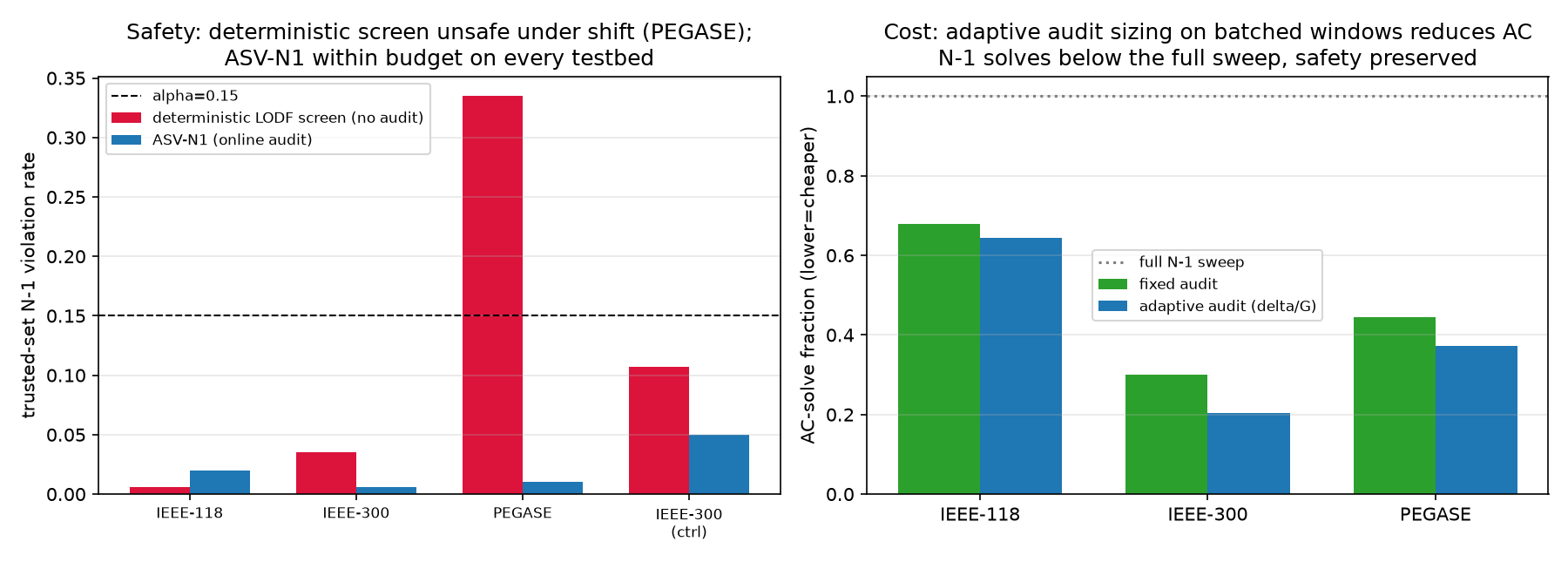}
\caption{Left: safety. A deterministic surrogate screen (no audit) is unsafe under shift on PEGASE (skipped-set N-1 violation $0.34$ against the $0.15$ budget), while ASV-N1 is safe on all four testbeds including a genuine controller-induced shift. Right: cost. Adaptive audit sizing lowers the AC-solve fraction on every system (IEEE 300-bus reaches $0.20$, an $80$ percent reduction), all below the full N-1 sweep, with validity preserved.}
\label{fig:final}
\end{figure*}

\subsection{Testing the probabilistic guarantee across many windows}
A single trusted-set violation number does not test an $(\alpha,\delta)$ guarantee, which is a statement about the distribution of outcomes across windows: \Cref{thm:main} requires that the per-window violation rate exceed $\alpha$ in at most a $\delta$ fraction of windows. We test this directly. For each system we form many control windows by randomly partitioning the labeled pool into windows of about one thousand contingencies, repeat over forty independent seeds (which also re-randomize the audit), and record the realized trusted-set violation rate of each window. \Cref{fig:multiwindow} plots the resulting cumulative distributions. With $\alpha=0.15$ and $1-\delta=0.9$, the fraction of windows that breach $\alpha$ is $0.058$ on IEEE 118-bus and $0.000$ on IEEE 300-bus and PEGASE, all at or below $\delta=0.1$, across $1360$, $1000$, and $4160$ windows respectively. The empirical breach frequencies are therefore consistent with the stated guarantee, not only in expectation but at the required confidence level, and the breach fraction approaches but does not exceed $\delta$ on IEEE 118-bus, indicating the certificate is tight rather than loose at this window size.

Window size trades tightness for conservativeness. At the operational window size of a single operating point, with about $173$, $268$, and $1749$ credible contingencies for IEEE 118-bus, IEEE 300-bus, and PEGASE respectively, the Clopper-Pearson interval is wider, so the certified threshold is stricter, the certificate verifies more and skips less, and the trusted-set violation is essentially zero: over $83040$, $61080$, and $251760$ windows for the three systems, the breach fraction is $0.000$ everywhere. The certificate is thus even safer, but cheaper savings require aggregating a larger contingency set per window (for example batching several operating points or a wider credible set), which tightens the bound as in \Cref{fig:multiwindow}.

\begin{figure}[t]
\centering
\includegraphics[width=\linewidth]{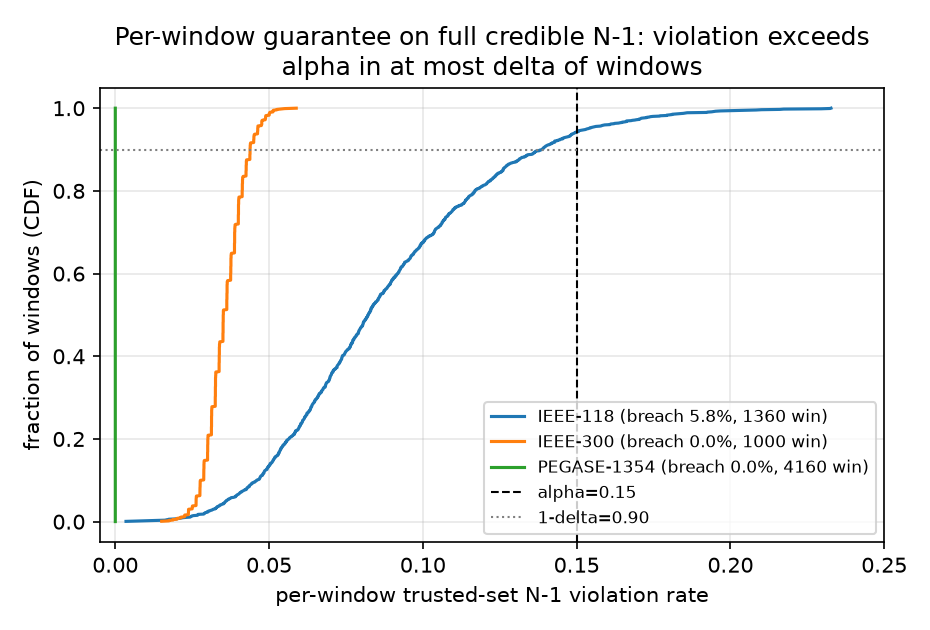}
\caption{The $(\alpha,\delta)$ guarantee holds across many windows. Cumulative distribution of the per-window trusted-set violation rate over hundreds of windows and forty seeds per system. The horizontal line at $1-\delta=0.9$ meets each curve at or before $\alpha=0.15$, so at most a $\delta$ fraction of windows breach the budget (breach fractions: IEEE 118-bus $5.8$ percent, IEEE 300-bus and PEGASE $0$ percent).}

\label{fig:multiwindow}
\end{figure}

\subsection{Comparison to deterministic screening}
We compare ASV-N1 against the operational screens practitioners use, on PEGASE under controller-induced shift (the hardest case), at budget $\alpha=0.15$ (\Cref{tab:baseline}). A deterministic screen that skips every contingency the surrogate predicts safe is cheap but unsafe (violation $0.335$), because the surrogate's confidently-safe predictions are wrong under stress; a safety margin does not rescue it ($0.336$ at a $10$ percent buffer), and a threshold calibrated on historical data breaks catastrophically when the controller shifts the operating point ($0.671$). Verifying the top-$k$ most severe contingencies at a budget matched to ASV-N1 happens to be safe here ($0.068$), but it carries no certificate and cannot know the required budget without an audit; under a different shift it degrades to the calibrated-threshold failure. ASV-N1 certifies safety ($0.011$) at the same cost as top-$k$. This is the central distinction from fast screening: ASV-N1 provides a guarantee that survives an unreliable surrogate.

\begin{table}[t]
\centering
\caption{Operational screens on PEGASE under controller-induced shift (real-time single operating point, $\alpha=0.15$). Deterministic, margin, and historically-calibrated screens are unsafe; top-$k$ at a matched budget is safe only by chance and gives no certificate. ASV-N1 certifies safety at the same cost.}
\label{tab:baseline}
\begin{tabular}{lccc}
\toprule
Screen & Trusted viol. & AC-frac & Safe? \\
\midrule
Deterministic LODF screen & 0.335 & 0.01 & no \\
Margin screen ($10\%$ buffer) & 0.336 & 0.01 & no \\
Static calibrated threshold & 0.671 & 0.00 & no \\
Top-$k$ severe (matched budget) & 0.068 & 0.47 & no cert. \\
ASV-N1 (audited) & \textbf{0.011} & 0.47 & \textbf{certified} \\
\bottomrule
\end{tabular}
\end{table}

\subsection{Adaptive audit sizing}
\Cref{fig:final} (right) compares a fixed audit fraction to the adaptive rule. The adaptive rule lowers the AC-solve fraction on every system while preserving validity; on IEEE 300-bus it falls from $0.30$ to $0.20$. All bars remain below the full N-1 sweep baseline, so ASV-N1 always saves work relative to exhaustive verification.

\paragraph{Parameter selection}
The budget $\alpha$ is the operator's chosen safety level and sets the safety-versus-cost tradeoff: a smaller $\alpha$ certifies a stricter threshold, which verifies more contingencies and raises the AC-solve fraction, while validity holds at every $\alpha$. Sweeping $\alpha\in\{0.05,0.10,0.15,0.20\}$ produces a monotone cost curve that stays below the full-sweep baseline throughout, so an operator can read off the verification cost of a desired safety level directly. The confidence $1-\delta$ enters only through the Clopper-Pearson width and has a mild logarithmic effect on cost; we use $1-\delta=0.9$. The audit size is not a free parameter: the adaptive rule selects it to minimize total cost, so the operator sets only $\alpha$ and $\delta$.

\subsection{Genuine controller-induced shift}
The previous experiments induce shift by load scaling. To test a genuine controller-induced shift, we contrast two control policies on IEEE 300-bus. Because the economic controller only becomes security-naive under stress, where the DC optimal dispatch and the AC security limits diverge, this experiment uses a more heavily loaded regime and a representative contingency sample than the scalability study; the certificate is valid over whatever contingency set is monitored. The historical operator uses proportional dispatch, while the deployed controller uses an economic DC optimal power flow redispatch that minimizes generation cost subject to the DC power balance and base-case line limits only, with no N-1 contingency or thermal-security constraints; this is what makes it security-naive under stress and a realistic source of controller-induced shift. The two policies, on the same loads, produce different operating points; the economic, security-naive controller raises the N-1 violation rate from $0.30$ to $0.61$ and the mean surrogate score from $3.75$ to $8.71$, a strong shift. \Cref{tab:ctrl} reports deploying an operator-calibrated certificate on the controller. The static (no-audit) certificate degrades to a trusted-set violation of $0.166$, above the budget, because it was calibrated on the operator distribution. ASV-N1, whose audit observes the controller's own operating points, holds the trusted-set violation at $0.043$, at a higher verification cost (AC-solve fraction $0.84$). This is the honest and expected behavior: under a strong, genuine controller-induced shift the savings shrink to about $16$ percent, because the audit correctly responds to the less reliable surrogate by verifying more. Safety is maintained throughout; the surrogate's degradation is paid in compute, not in violations.

\begin{table}[t]
\centering
\caption{Genuine controller-induced shift on IEEE 300-bus (operator proportional dispatch vs. economic DC-OPF controller; $\alpha=0.15$). The operator-calibrated static certificate degrades past the budget; ASV-N1 stays safe.}
\label{tab:ctrl}
\setlength{\tabcolsep}{4pt}
\begin{tabular}{lcc}
\toprule
Certificate & Trusted-set viol. & AC-solve frac. \\
\midrule
Static (operator-calibrated, no audit) & 0.166 & 0.54 \\
ASV-N1 (online audit on controller) & 0.043 & 0.84 \\
\bottomrule
\end{tabular}
\end{table}

\section{Discussion and Limitations}
\label{sec:discussion}

\paragraph{What the guarantee is and is not}
\Cref{thm:main} is a per-window, high-probability statement, not a per-action worst-case one: it controls the violation \emph{rate} of the skip set at level $\alpha$ with confidence $1-\delta$ (\Cref{thm:main}), so the unverified trusted subset has violation rate at most $\alpha/(1-f)$ (\Cref{cor:subset}), aggregated over windows by \Cref{cor:longrun}. This matches how operators reason about risk budgets and is what makes a distribution-free statement possible without restricting the deployment distribution. It is therefore a careful claim: the certificate controls a violation-rate budget under arbitrary deployment shift, but it does not guarantee that every individual skipped contingency is safe, nor does it provide worst-case N-1 security for each skipped outage. An operator who needs a hard per-contingency guarantee can drive the budget toward zero, at which point ASV-N1 verifies essentially everything. ASV-N1 is thus intended for risk-budgeted screening and triage, not for settings where policy mandates deterministic verification of every credible contingency; a small set of high-impact outages can be excluded from the skip set and always verified, so the certificate complements rather than replaces a deterministic check where one is required.

\paragraph{Scope}
The certified object is N-1 thermal security under the AC model (\Cref{asm:oracle}). Voltage and reactive-power limits are outside the present theorem; in our experiments we track voltage and non-convergence events separately. Extending the audit to a joint thermal-and-voltage label is direct in principle, since the audit already runs full AC, and is a natural next step. The trusted verifier is the AC N-1 solver, which is standard in operations.

\paragraph{Cost floor and surrogate quality}
The audit is an unavoidable cost: even with a perfect surrogate, ASV-N1 pays the audit fraction. \Cref{prop:cost} makes the tradeoff explicit, and \Cref{fig:final} shows the audit fraction is a small part of the total. The audit contingencies are independent and embarrassingly parallel, so the per-window wall-clock cost scales with available compute rather than with system size; our largest case, the 1354-bus PEGASE system, runs with the same procedure as the smaller ones, indicating the approach scales to realistic networks. A better surrogate raises $\tau^\star$ and lowers cost; a learned surrogate that resolves the stressed regime where linear screening fails is a promising way to push the savings on the harder systems toward those on IEEE 300-bus, without affecting validity.

\paragraph{Integration in an energy management system}
ASV-N1 is designed to sit between a controller and the dispatch it commits. Concretely, in an energy management system the flow is: (i) a controller (reinforcement learning, model-predictive control, or an optimization such as security-constrained optimal power flow) proposes an operating point; (ii) ASV-N1 computes the linear surrogate scores for the operator-defined credible contingency set, draws the audit and runs full AC power flow on the audit plus the above-threshold contingencies, and calibrates $\tau^\star$; (iii) it returns the trusted set, the verified violations, and the certified risk. If the realized or certified violation rate exceeds the operator's budget, the action is rejected or sent back for re-dispatch. The certificate is an overlay, or watchdog layer, that requires no change to the controller and reuses the AC contingency solver an operator already trusts, so it complements rather than replaces security-constrained optimal power flow.

\paragraph{Controllers}
We demonstrated a genuine controller-induced shift with an economic redispatch controller. Wrapping a trained learning-based controller end to end is future work; the method is agnostic to the controller because it certifies the resulting operating point, not the policy.

\section{Conclusion}
\label{sec:conclusion}

We presented Audited Selective Verification, a distribution-free certificate that risk-controls the N-1 thermal violation rate of a black-box controller's action, rather than asserting hard per-contingency N-1 security. By using a cheap surrogate only to decide what to verify, and resting validity on an online audit plus full AC verification, ASV-N1 holds the violation rate of the trusted contingencies at a budget with high confidence, for an arbitrary deployment distribution and any surrogate, including under the shift the controller itself induces. We showed that simpler threshold certificates, static or adaptive, fail when the surrogate is unreliable under stress, and that deterministic screening is unsafe under shift, whereas ASV-N1 keeps the realized violation rate within budget across the tested windows on IEEE 118-bus, IEEE 300-bus, and the 1354-bus PEGASE system. At the single operating point, the real-time control window, it cuts AC N-1 solves by $29$ to $75$ percent, rising to $36$ to $80$ percent when contingencies are batched across operating points, and it degrades gracefully toward verify-all exactly where the surrogate is least trustworthy. The method gives operators a tunable safety budget with a quantified, surrogate-adaptive compute cost, and a verification-backed safety layer behind which a controller can be deployed. As utilities begin to adopt learning-based controllers in operations, ASV-N1 offers a concrete watchdog layer that admits such controllers under an explicit, auditable safety budget rather than on trust, which aligns with operational practice around documented contingency assessment: the certified risk budget and the audit log give operators an auditable trail. Future work includes a joint thermal-and-voltage audit, a learned surrogate to lower cost on stressed systems, and end-to-end evaluation with trained learning-based controllers.

% Generated by IEEEtran.bst, version: 1.14 (2015/08/26)

\appendices
\section{Full Proof of Theorem 1}
\label{app:proof}

We restate the setting. Condition on the fixed window: the pairs $\{(r_i,V_i)\}_{i=1}^N$ are arbitrary fixed quantities. The audit $A$ includes each index independently with probability $\pi$ (or is a fixed-size uniform sample), drawn independently of $\{(r_i,V_i)\}$ by \Cref{asm:audit}. For a threshold $\tau$, $\Skip(\tau)=\{i: r_i\le\tau\}$ is fixed, and $\Rrisk(\tau)=|\Skip(\tau)|^{-1}\sum_{i\in\Skip(\tau)}V_i$. The candidate thresholds $\tau^{(1)}<\dots<\tau^{(M)}$ are the distinct score values, ordered independently of the labels.

\subsection{Proof of Lemma 1}
Fix $\tau$ and condition on $n=|A\cap\Skip(\tau)|$, a function of $A$ and the scores only. By \Cref{asm:audit}, given $n$, the audited subset of $\Skip(\tau)$ is a uniform size-$n$ sample without replacement from the finite population $\Skip(\tau)$, whose violation indicators are fixed with mean $\Rrisk(\tau)$. Hence the audited violation count $K$ is hypergeometric with mean $n\,\Rrisk(\tau)$. Under $H:\Rrisk(\tau)>\alpha$, $\mathbb{E}[K\mid n]>n\alpha$.

Define $P=\Pr[\mathrm{Bin}(n,\alpha)\le K]$, the binomial left-tail evaluated at the observed $K$; this equals the Clopper-Pearson one-sided test that rejects when $U_\delta(K,n)\le\alpha$. By the classical result that the hypergeometric distribution is stochastically dominated, in the lower tail, by the binomial distribution with the same mean \cite{hoeffding1963,serfling1974}, the left tail probability under the true hypergeometric law is at most that under $\mathrm{Bin}(n,\Rrisk(\tau))$, and since $\Rrisk(\tau)>\alpha$ shifts mass to larger $K$, the test statistic $P$ is super-uniform: $\Pr[P\le\delta\mid n]\le\delta$. Taking expectation over $n$ preserves the bound. Thus $P$ is a valid p-value for $H$, and the Clopper-Pearson rule is, if anything, conservative for the without-replacement audit. \hfill$\square$

\subsection{Proof of Theorem 1}
Test the hypotheses $H_j:\Rrisk(\tau^{(j)})>\alpha$ in increasing order of $j$ (strictest skip set first). Reject $H_j$ if its audited p-value $P_j\le\delta$, and stop at the first non-rejection; let $\tau^\star$ be the largest threshold in the rejected prefix. This is fixed-sequence (ordered) testing along a data-independent order. Fixed-sequence testing controls the family-wise error rate at $\delta$ under arbitrary dependence among the $P_j$: an error requires the first true null in the sequence to be rejected, an event of probability at most $\delta$ by \Cref{lem:pvalue} applied to that single hypothesis \cite{holm1979}.

Therefore, with probability at least $1-\delta$, no true null is rejected, so every rejected $H_j$ is false, i.e., $\Rrisk(\tau^{(j)})\le\alpha$ for all $j$ in the rejected prefix; in particular $\Rrisk(\tau^\star)\le\alpha$. No assumption on the surrogate or on the deployment distribution was used: the argument conditions on the arbitrary fixed window and uses only the label-independent randomness of the audit. This is the transductive Learn-Then-Test guarantee with the audit as calibration and the fixed skip population as the controlled risk object \cite{ltt2021,rcps2021}. \hfill$\square$

\subsection{Proof of Corollary 1}
On the event $\Rrisk(\tau^\star)\le\alpha$ of \Cref{thm:main}, the total violation count in the fixed set $\Skip(\tau^\star)$ is $\sum_{i\in\Skip(\tau^\star)}V_i=\Rrisk(\tau^\star)\,|\Skip(\tau^\star)|\le\alpha\,|\Skip(\tau^\star)|$. The trusted set $T=\{i\notin A: r_i\le\tau^\star\}\subseteq\Skip(\tau^\star)$, so its violation count is no larger: $\sum_{i\in T}V_i\le\sum_{i\in\Skip(\tau^\star)}V_i\le\alpha\,|\Skip(\tau^\star)|$. Dividing by $|T|=(1-f)\,|\Skip(\tau^\star)|$ gives $\frac{1}{|T|}\sum_{i\in T}V_i\le\alpha/(1-f)$. This is a deterministic consequence of the event in \Cref{thm:main}, so it holds with probability at least $1-\delta$; it makes no independence assumption about $T$ after the label-dependent choice of $\tau^\star$, which is exactly why it survives the selection. \hfill$\square$

\subsection{Proof of Proposition 1}
Let the $G$ candidate audit sizes be $n_1<\dots<n_G$, fixed in advance, realized as nested prefixes of a single label-independent random ordering of $\mathcal{C}$. For each $g$, \Cref{thm:main} applied to audit $A_g$ at level $\delta/G$ gives $\Pr[\Rrisk(\hat\tau_g)>\alpha]\le\delta/G$. A union bound over $g=1,\dots,G$ yields $\Pr[\exists g:\Rrisk(\hat\tau_g)>\alpha]\le\delta$, so simultaneously all $G$ certified thresholds are valid with probability at least $1-\delta$. Any selection rule, including one that inspects audited labels to minimize predicted cost, returns one of these simultaneously-valid thresholds; hence the reported threshold satisfies $\Rrisk\le\alpha$ with probability at least $1-\delta$. \hfill$\square$

\end{document}